\newcommand{\Pro}{\textnormal{Pr}}
\newcommand{\E}{\mathbb{E}}
\newtheorem{thm}{Theorem}
\newtheorem{lemma}[thm]{Lemma}
\numberwithin{thm}{section}
\begin{document}
\title{A New Algorithm for the Robust Semi-random Independent Set Problem}
\author{Theo McKenzie\thanks{University of California, Berkeley.}\and Hermish Mehta\footnotemark[1]\and Luca Trevisan\thanks{Bocconi University. Luca Trevisan was supported by the NSF under grant CCF 1815434 and his work on this project has received funding from the European Research Council (ERC) under the European Union’s Horizon 2020 research and innovation programme (grant agreement No. 834861).}}
\date{}
\maketitle
\begin{abstract}
    We study the independent set problem in a semi-random model proposed by Feige and Kilian. This model selects a graph with a planted independent set of size $k$ and then allows an adversary to modify a large fraction of edges: the subgraph induced by the complement of the independent set can be modified arbitrarily, and the adversary may add (but not delete) edges from the independent set to its complement. In particular, the adversary can create a graph in which the initial planted independent set is not the largest independent set. 
    Feige and Kilian presented a randomized algorithm, which with high probability recovers an independent set of size at least $k$ (which may not be the planted one) when $k=\alpha n$ where $\alpha$ is a constant, and the probability of a random edge $p>(1+\epsilon)\ln n/\alpha n$. Steinhardt studied a restriction of this model in which the adversary is not allowed to add edges from the planted independent set to its complement, and focused on the problem of finding the planted independent set. He develops an algorithm that, given a random ``seed'' vertex in the planted independent set, finds the planted independent set provided that $k= \Omega(n^{2/3} {\log n}^{1/3})$ in the $p=1/2$ regime. Equivalently, by guessing the seed, the algorithm is able to output a list of at most $n$ independent sets of size $k$ such that one of them is the planted one.
    
    We give a new deterministic algorithm in the Feige-Kilian model that finds an independent set of size at least $.99k$ provided that the planted set has size $k=\Omega({n^{2/3}}/{p^{1/3}})$, and finds a list of independent sets, one of which is the planted one provided that $k=\Omega({n^{2/3}}/{p})$. This improves on the algorithm of Feige and Kilian by working for smaller $k$ if $p=\Omega(1/n^{1/3})$, and improves on the algorithm of Steinhardt by working for slightly smaller $k$ and by working against a stronger adversarial model. The ability to find a good approximation of the largest independent set is new when $p<\ln n/k$.
\end{abstract}
\newpage
\section{Introduction}

The maximum independent set problem is, given a graph $G = (V, E)$, find the largest set of mutually non-adjacent vertices.  The associated decision problem, to determine whether a graph contains an independent set of size at least $k$, was one of Karp's twenty-one problems originally proved to be \textsf{NP}-complete  $\cite{KARP}$\footnote{In this work, as well as in others listed below, the planted clique problem is considered. Here we are describing these results in the equivalent point of view in which one wants to find a planted independent set in the complementary graph.}. More recent hardness of approximation results \cite{Zuck} show that, for every $\epsilon >0$, it is impossible to approximate maximum independent set to within $n^{1 - \epsilon}$ in the worst case, unless $\textsf{P} = \textsf{NP}$.

The worst-case hardness of this problem has motivated the study of its
average-case complexity  and of its complexity in semi-random models that are intermediate between average-case analysis and worst-case analysis.

A classical model for the average-case analysis of graph algorithms is the $G_{n,p}$ Erd\'os-R\'enyi model, where each edge is independently present with some probability $p$. In such a model, the largest independent set has size  about $2 \log_b n$ with high probability, for $b = (1-p)^{-1}$ and $n$ the number of vertices in the graph  \cite{Matula}. A simple greedy algorithm finds, with high probability, an independent set of size about $\log_b n$. It has been a long-standing open problem to give an algorithm that finds an independent set of size $(1+\epsilon)\log_b n$.

Another classical model is the ``planted independent set'' one, in which one starts
from a $G_{n,p}$ random graph and then one picks a random set of $k=k(n)$ vertices and removes all existing edges among those vertices, turning them into an independent set. For $p=1/2$, if $k \gg \log n$, the selected set of vertices (which we will call the ``planted'' independent set of the graph) is, with high probability, the unique maximum independent set of the graph. In this case, the problem of finding the largest independent set in the graph coincides with the ``recovery'' problem of identifying the selected set of vertices.

When the size $k(n)$ of the the planted independent set is $\Omega({\sqrt{n \log n}})$, choosing the $k$ vertices of lowest degree is sufficient to find the hidden independent set \cite{Kucera}; Alon, Krivelevich and Sudakov \cite{AKS} give a spectral algorithm to find the planted independent set with high probability when $k(n)=\Omega(\sqrt n)$. It is an open problem whether there is a polynomial time algorithm that finds the planted independent set with high probability in the regime $k(n) = o(\sqrt n)$. Barak, Hopkins, Kelner, Kothari, Moitra, and Potechin established this is impossible for sum of squares algorithms \cite{Barak}.

When studying simple generative models for graphs, such as $G_{n,1/2}$ or planted independent set models, there is a risk of coming up with algorithms that perform well in the model, but that are an overfit for it. For example picking the $k$ vertices of lowest degree is a good way to find a size $k$ independent set in the planted model (if $k\gg \sqrt{n \log n}$) but it may not perform well in practice.

In order to validate the robustness of average-case analyses of algorithms, there has been interest in the study of semi-random generative models, in which a graph is generated via a combination of random choices and adversarial choices. Even though no simple probabilistic model can capture all the subtle properties of realistic graph distributions, realistic distributions can be captured by semi-random models if the way in which the realistic distribution differs from the simple probabilistic model is interpreted as the action of the adversary.

Moreover, by studying semi-random models we gain insight into what part of a problem governs its hardness. If an algorithm in a random graph solves a problem in polynomial time with high probability, then we can ask how adversarial we can make our graph while still solving it in polynomial time with high probability. For example, Feige and Kilian believed the planted independent set should be recoverable without regard for the edges that do not touch the vertices of the independent set, so sought algorithms that could find the maximum independent set when these edges were made adversarial \cite{FK}. In order to gain insight on what instances of unique games can be difficult, Kolla, Makarychev, and Makarychev  created algorithms that solved unique games with high probability in a model where out of 4 given steps of creating a $(1-\epsilon)$ satisfiable instance, only 1 is randomized \cite{KMM}.

Semi-random generative models for graphs were first introduced by Blum and Spencer \cite{BS}, and then further studied by Feige and Kilian \cite{FK}.

In the Feige-Kilian model, one generates a graph with a planted size $k$ independent set as follows: a set $S$ of $k$ vertices is chosen at random; Then, edges from $S$ to $V-S$ are selected as in a $G_{n,p}$ model; finally, an adversary is allowed to choose arbitrarily the edges within vertices in $V-S$, and they are allowed to add edges from $S$ to $V-S$. Note that, when $k<n/2$, the planted set $S$ need not be a largest independent set in the graph since, for example, the adversary could choose to create an independent set of size $k+1$ among the vertices in $V-S$.

Feige and Kilian studied the complexity of finding an independent set {\em of size at least $k$} in the graph arising from their model. They prove that, for $\epsilon >0$, they can solve the problem in polynomial time if 
$p > (1 + \epsilon) \ln n / k$ and $k=\alpha n$ for constant $\alpha$, and, if $p < (1 - \epsilon) \ln n / k$, the problem is not solvable in polynomial time, unless $\textsf{NP} \subseteq \textsf{BPP}$. Since then, progress has been made on weaker monotone semirandom versions of the problem \cite{Chen, CO1, FKr}. Moreover, Coja-Oghlan generalized Feige and Kilian's algorithm to sparse subgraphs \cite{CO2} as opposed to independent sets. However, prior to this paper, there had been no algorithm that improved on the size of the independent set in the Feige-Kilian model.

Steinhardt \cite{JS} studied the recovery problem (that is, the problem of finding the vertices of the planted independent set) in a slight restriction of the Feige-Kilian model with $p=1/2$, in which the adversary can choose edges arbitrarily  within  $V-S$ but cannot add edges between $S$ and $V-S$. Although the problem of recovering $S$ seems to be information-theoretically impossible when $k< n/2$, Steinhardt studies a ``list-decoding'' version of the problem in which the goal is to output a collection of sets, one of which is the planted independent set (or to output the planted independent set given a random vertex sampled from $S$). Steinhardt shows that the problem 
is  information-theoretically unsolvable when $k = o(\sqrt n)$ by showing that a regime in which each vertex belongs to a large number of independent sets has small total variation distance from the semirandom model. He also, along with Charikar and Valiant, gives a polynomial time recovery algorithm when $k = \Omega (n^{2/3} \log^{1/3}n)$ \cite{Char}.

In this paper, we provide a deterministic polynomial time algorithm
that finds an independent set of size $\geq (1-\epsilon) \cdot k$ in the Feige-Kilian model (in which the adversary is
allowed to add edges between $S$ and $V-S$) that works with high probability for $k = \Omega( n^{2/3}/p^{1/3})$, that is, $k = 
\Omega(n/ d^{1/3})$, where $d=pn$ is the average degree. Moreover, we can find a list of independent sets one of which is the planted one, if $k = \Omega( n^{2/3}/p)$. Comparing this to Feige and Kilian's model, this works for a wider range of the parameter $k$, which can be sublinear, but for a narrower range of the parameter $p$:  our full-recovery algorithm must have $p=\Omega(1/{n^{1/3}})$, whereas the Feige-Kilian algorithm only requires $p>(1+\epsilon)/(\alpha n)$. However, our $1-\epsilon$ factor approximation can work for $p=\Omega(1/n)$. 

Instantiating our algorithm with $p=1/2$, we have an improvement over Steinhardt's algorithm, by gaining a $\log^{1/3}$ factor in the size parameter $k$ for which the algorithm works, and by being able to deal with the full Feige-Kilian adversarial model.

Specifically, our results are the following:

\begin{thm}
\label{thm: approx}
 For all $\epsilon>0$, there exists a constant  $c_1(\epsilon)$ such that when
\[k\geq c_1\frac {n^{2/3}}{p^{1/3}}\]
we can, with high probability,  return an independent set of size at least $(1-\epsilon)k$.
\end{thm}

\begin{thm}
\label{thm: main} There exists a constant $c_2$ such that 
when \[k \geq c_2\frac{n^{2/3}}{p}\]we can, with high probability,  return at most $n$ candidate solutions such that one is the original independent set. In particular, we are able
to find an independent set of size at least $k$.
\end{thm}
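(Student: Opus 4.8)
\section*{Proof proposal}

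The plan is to follow the ``crude SDP'' paradigm sketched in the introduction. First I would set up the C-SDP: assign to each vertex $v$ a unit vector $x_v$ in $\R^n$, impose $\langle x_u, x_v\rangle = 0$ for every edge $\{u,v\}\in E$ (the $\vartheta$-function-style orthogonality constraint), and minimize $\sum_{u,v\in V}\|x_u-x_v\|^2 = \sum_{u,v}(2-2\langle x_u,x_v\rangle)$. This program is feasible (assign distinct coordinate vectors to the vertices) and can be solved to any inverse-polynomial additive accuracy in polynomial time; since $S$ carries no internal edges, the minimizer is free to make the $x_u$, $u\in S$, mutually close, whereas the adversarial edges inside $\overline S$ force those vectors apart. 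The whole argument then rests on two claims, exactly as outlined in Section~1: (i) with high probability over the random edges, and \emph{for every} adversary, the optimal solution $x^\star$ clusters $\{x^\star_u : u\in S\}$ into a ball of small constant radius; (ii) this clustering lets us read off a list of $n$ candidate sets containing $S$.

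For claim (i) I would argue by contradiction. Suppose $x^\star$ is optimal but the $S$-vectors are not clustered, i.e.\ $\sum_{u,u'\in S}\|x^\star_u-x^\star_{u'}\|^2 > \delta k^2$ for a suitable small constant $\delta$; equivalently $\big\|\tfrac1k\sum_{u\in S}x^\star_u\big\|^2 < 1-\delta/2$. I would then construct a new feasible solution $\tilde x$ that agrees with $x^\star$ off $S$ and replaces the $S$-vectors by a ``collapsed'' arrangement (roughly, pushing them toward their centroid direction while, if necessary, perturbing the $\overline S$-vectors to restore the $S$--$\overline S$ orthogonality constraints). The objective change splits as
\begin{equation*}
\Delta = \underbrace{\sum_{u,u'\in S}\big(\|\tilde x_u-\tilde x_{u'}\|^2-\|x^\star_u-x^\star_{u'}\|^2\big)}_{\text{a gain of order }-\delta k^2} \;+\; \underbrace{\sum_{u\in S,\, v\in\overline S}\big(\|\tilde x_u-x_v\|^2-\|x^\star_u-x^\star_v\|^2\big)}_{\text{must be controlled}},
\end{equation*}
and the purely-$S$ term drops by $\Omega(\delta k^2)$ by construction. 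To show the cross term cannot undo this gain, I would bound it (and the inner products $\langle x^\star_u,x^\star_v\rangle$ it involves) by a bilinear form in the vectors $\tilde x_u - x^\star_u$ and $x^\star_v$ weighted by the $S\times\overline S$ edge/non-edge pattern, and invoke the Grothendieck inequality to pass from this vector-valued form to the integral optimum $\max_{s\in\{\pm1\}^S,\,t\in\{\pm1\}^{\overline S}}\sum_{u,v}\epsilon_{uv}s_ut_v$, a combinatorial discrepancy of a random bipartite graph. Since the adversary is only permitted to \emph{add} $S$--$\overline S$ edges, its action can only shrink the relevant non-edge set to a subset of a random set, so a Chernoff bound on a fixed sign pattern together with a union bound over the $2^{|S|+|\overline S|}$ patterns (and over adversary choices) shows this discrepancy is $O(n\sqrt{k}\,\polylog n)$ with high probability. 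Comparing the $\Omega(\delta k^2)$ gain against the $O(n\sqrt k)$ loss forces $\Delta<0$ once $k^{3/2}\gtrsim n$, i.e.\ $k\ge c_1 n^{2/3}$ — contradicting optimality. This marriage of the feasible modification with the Grothendieck reduction (in particular making the modification compatible with the $S$--$\overline S$ orthogonality constraints while keeping the cross-term estimable) is the step I expect to be the main obstacle, and it is where the $n^{2/3}$ threshold is born.

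For claim (ii), given the clustering radius bound from (i), I would post-process as follows: for each vertex $v\in V$, let $B_v$ be the set of the $k$ vertices $w$ with $\|x^\star_w-x^\star_v\|$ smallest, and output $\{B_v\}_{v\in V}$, a list of at most $n$ sets. To see $S$ appears on the list, take a vertex $v\in S$ whose vector lies near the $S$-centroid; every other vertex of $S$ is within the (small constant) clustering radius, while a typical vertex $w\in\overline S$ has $(1-o(1))k/2$ random neighbours in $S$ (Chernoff; the adversary only adds more), each forcing $\langle x^\star_w, x^\star_u\rangle=0$ and hence $x^\star_w$ nearly orthogonal to the $S$-centroid, so $x^\star_w$ is at distance close to $\sqrt2$ from it. Provided the clustering radius from (i) is a small enough constant, the $k$ nearest vectors to $x^\star_v$ are exactly those of $S$, so $B_v=S$; if the clustering bound is only ``all but $o(k)$ vertices of $S$,'' one extra cleanup round (re-running the ball construction on the near-correct candidate, or bootstrapping with a sharper radius) recovers $S$ exactly. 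Since $S$ is on the list, is independent, and has size $k$, we have in particular found an independent set of size at least $k$, completing Theorem~\ref{thm: main}.
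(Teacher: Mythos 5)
Your skeleton is the paper's: the same C-SDP (unit norms, nonnegative inner products, orthogonality on edges, minimize total squared distance), clustering of the $S$-vectors proved by comparing the optimum against a solution that collapses $S$, a Grothendieck-plus-Chernoff-plus-union-bound discrepancy estimate of order $n\sqrt k$ over $S\times\overline S$, the comparison of a $\Theta(k^2)$ gain against an $O(n\sqrt k)$ loss giving the threshold $k\ge c_1 n^{2/3}$, and a per-vertex ball construction producing at most $n$ candidate sets. However, the step you yourself flag as the main obstacle is a genuine gap as you describe it, and the paper's resolution is simpler and different in two ways. First, the modified solution: pushing the $S$-vectors toward their centroid and then ``perturbing the $\overline S$-vectors to restore the $S$--$\overline S$ orthogonality'' is not something your sketch shows can be done --- any such repair must also preserve orthogonality on the adversarial edges inside $\overline S$ and all the nonnegativity constraints, with no control on how it changes the objective. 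The paper instead maps every vertex of $S$ to a single \emph{new} unit vector orthogonal to all the $\overline S$-vectors; feasibility is then automatic, no repair is needed, and comparing with the optimum yields the clean inequality $\sum_{S\times S}\|x^*_u-x^*_v\|^2+2\sum_{S\times\overline S}\|x^*_u-x^*_v\|^2\le 4k(n-k)$.

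Second, the adversary in the Grothendieck step: you bound the cross term by a bilinear form weighted by the edge/non-edge pattern of the actual graph and then union bound ``over adversary choices.'' That union bound is not available (and not needed), and once adversarial edges are included the weight matrix is no longer centered, so the Chernoff estimate on a fixed sign pattern does not apply. The paper never touches the adversarial cross edges: it only uses that on the \emph{random} edges $E'\subseteq E$ the optimal vectors are forced to squared distance exactly $2$, and applies Grothendieck to the centered matrix $1-2A_{uv}$ built from $E'$ alone, union bounding only over the $2^{2n}$ sign patterns; this shows the optimal cross-sum is already within $O(n\sqrt k)$ of its maximum $2k(n-k)$, so collapsing $S$ costs at most $O(n\sqrt k)$ there. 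Your parenthetical monotonicity intuition (``the non-edge set only shrinks'') is the right reason the adversary is harmless, but it should replace, not accompany, the union bound over adversaries. Finally, on recovery: the provable clustering is only that some $u\in S$ has a constant-radius ball containing a $2/3$ fraction of $S$ (Markov from the $O(n\sqrt k)$ bound), so your $k$-nearest-neighbor sets $B_v$ will generally not equal $S$; the concrete ``cleanup'' is the paper's greedy completion step --- with high probability every $v\in\overline S$ has an edge into the clustered part of $S$ (Chernoff plus a union bound over $\overline S$), so only vertices of $S$ can be added, and all of $S$ is added because $S$ is independent, which is what puts $S$ itself on the list of at most $n$ candidates.
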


While Steinhardt relied on spectral techniques, we use semidefinite programming (SDP). The improved robustness comes from the robustness of the SDP technique, and the logarithmic gain comes from an analysis of the SDP via the Grothendieck inequality, which tends to give tighter information about the properties of random graphs than spectral bounds obtained from matrix Chernoff bounds or related techniques.

A natural way to apply SDP techniques in this setting would be to solve an SDP relaxation of the maximum independent set problem on the given graph. This, however, would not work, because the adversary could create a large set with few edges in $V-S$, and the optimum of the relaxation could be related to this other set and carry no information about $S$.

Instead, we use a ``crude SDP'' (C-SDP), a technique used by Kolla, Makarychev and Makarychev in their work on semi-random Unique Games \cite{KMM} and by Makarychev, Makarychev and Vijayaraghavan \cite{MMV} in their later work on semi-random cut problems. Crude SDPs are not relaxations of the problem of interest and, in particular, there is no standard way of mapping an intended solution (in our case, the set $S$) to an associated canonical feasible solution of the SDP. Rather, the crude SDP is designed in such a way that the optimal solution reveals information about the planted solution.

Our crude SDP will associate a unit vector to each vertex, with the constraint that adjacent vertices are mapped to orthogonal vectors; the goal of the SDP is to maximize the sum of inner products among all pairs of vectors. The point of the analysis will be that, with high probability over the choice of the graph, and for every possible choice of the adversary, the optimal solution of the SDP will map the vertices in $S$ to vectors that are fairly close to one another, and then $S$ can be recovered by looking for sets of $k$ vertices whose associated vectors are clustered. Specifically, our program will be 

\begin{equation*}
    \begin{array}{cc}
        \text{maximize} & \sum_{u,v} \langle x_u,x_v\rangle \\
        \text{subject to}&\\
        &||x_u||^2 = 1,\forall u\in V\\
        &\langle x_u, x_v \rangle = 0, \forall (u,v) \in E 
\end{array}
\end{equation*}

Note that, because of the constraints $||x_u||^2 = 1$, the cost function $\max \sum_{u,v} \langle x_u, x_v \rangle$ is equivalent to
$\min \sum_{u,v} || x_u - x_v ||^2$.

For motivation for why this may work, consider the following relaxation of the maximum independent set problem, which is a formulation of the Lov\'asz theta function. This function can retrieve the independent set in more restricted planted models \cite{FKr}.

\begin{equation*}
    \begin{array}{cc}
        \text{maximize} & \sum_{u,v}\langle x_u, x_v\rangle\\
        \text{subject to} &\\
        & \sum_{u} ||x_u||^2 = 1\\
                 &\langle x_u, x_v \rangle = 0,\forall (u,v) \in E
    \end{array}
\end{equation*}

The only difference in requirements is that instead of requiring the norms of the vectors to sum to 1, we require that the norm of each vector is 1. 

To gain intuition about the difference between the two SDPs, it is useful to see that, even if the goal is to recover an independent set of size at least $\Omega(k)$, even in the regime, say $k=n^{3/4}$, the solution found by theta function will not necessarily be helpful in the Feige-Kilian model.
Suppose for example that, in $G$, we have all possible edges between $S$ and $V-S$, and that the subgraph induced by $V-S$ is a graph 
in which the maximum independent set has size $\leq n^{.1}$ but the value of the theta function is $\geq n^{.99}$ (such a graph
is constructed by Feige \cite{F}). Note that an adversary can construct such a graph in the Feige-Kilian model. The optimum of the
theta function will set all vectors corresponding to vertices in $S$ to zero, and the remaining vertices to the optimal solution
of the theta function for the graph induced by $V-S$. This is because all vertices in $S$ must be associated with vectors that are orthogonal to all vectors associated with vertices in $V-S$, so any feasible solution is a convex combination of a solution entirely concentrated on $S$ and a solution entirely concentrated in $V-S$: any solution concentrated in $S$ can have cost at most $k$ while solutions concentrated in $V-S$ can have cost $\geq n^{.99}$, so the optimal solution will put zero weight on vertices in $S$. But given a solution entirely concentrated in $V-S$, no rounding algorithm can find a good solution, since all independent sets in $V-S$ have a size $\leq n^{.1}$.

Our C-SDP can be seen as a relaxation of the following problem: find a coloring of $G$ that maximizes the sum of squares of the sizes of the color classes, that is, find a partition of the vertices $\mathcal P=\{P_1\ldots P_m\}$ such that each $P_i$ is an independent set and, subject to this constraint, maximize $\sum_{i=1}^m |P_i|^2$. To see that our C-SDP is a relaxation of this problem, take any partition $\mathcal P=\{P_1\ldots P_m\}$, choose $m$ orthogonal unit vectors $x_1,\ldots,x_m$, and then associate to each vertex in $P_i$ the vector $x_i$. This will satisfy the orthogonality constraints of the C-SDP, and the cost function will evaluate to $\sum_i |P_i|^2$. 

Even in the presence of other independent sets larger than the planted one, we may hope that it is optimal 
for the above combinatorial problem to have the planted independent set be a color class, and that it is optimal
for the relaxation to associate to the vertices of the planted independent set a set of vectors that are close to each other,
that is, such that
\[\sum_{\substack{u,v\\ u\in S \textnormal{ or } v\in S}}\langle x_u^*,x_v^*\rangle\]
will be approximately $k^2$, where $x_u^*$ is the vector corresponding to $u\in V$ in the optimal solution of the C-SDP.

We prove this via an argument by contradiction: if a solution does not cluster the vectors corresponding to the vertices in $S$ close together, then we can construct a new feasible solution of lower cost, meaning that the original solution was not optimal. To bound the cost of the new solution we need to understand the sum of distances-squared, according to the original solution, between pairs of vertices in $S \times (V-S)$. This is where we use the Grothendieck inequality: to reduce this question to a purely combinatorial question that can be easily solved using the expansion of the connection between $S$ and $V\backslash S$ and union bounds.

The argument that we use to prove that an optimal solution to C-SDP must cluster the vertices of the planted independent set, because otherwise a feasible solution of larger cost would exist, relies on the assumption that $k = \Omega(n^{2/3}/p^{1/3})$. Although our analysis stops working when $k = o(n^{2/3}/p^{1/3})$, it is not clear whether the algorithm stops working as well, that is, whether there is a strategy for the adversary
in the $k = o(n^{2/3}/p^{1/3})$ regime which produces graphs that, with high probability, have C-SDP optimal solutions that map the vertices of the planted independent set to nearly orthogonal vectors such that $\sum_{u,v \in S} \langle x_u , x_v \rangle = o(k^2)$, although we suspect that this is the case.

Feige and Kilian were motivated to study robust algorithms for the planted independent set problem  because of how it relates to the $( n/k)$-coloring problem. In the semirandom $(n/k)$-coloring problem, we have $n/k$ planted independent sets each of size $k$ and add edges at random between these sets. Then, we can add any other edges that keep the planted independent sets independent.

Finding an algorithm for the maximum independent set gives us an algorithm to solve this planted $(n/k)$-coloring problem, as in this model, with high probability the largest independent sets will be the planted ones. It is worth noting that our improved independent set algorithm can also solve the $(n/k)$-coloring problem for $k=\Omega(n^{2/3}/p)$. Since Feige and Kilian's original paper, algorithms specifically constructed for the coloring problem can solve it with high probability when $k=\Omega(\sqrt{n\log n/p})$ \cite{CO1}.

\section{C-SDP Clustering}

An independent set $S$ of a graph $G=(V,E)$ is a subset $S \subseteq V$ such that the subgraph induced by $S$ does not contain any edges.  We form our semi-random graph $G$ as follows, using the same formulation as Feige and Kilian $\cite{FK}$. Here $\overline S = V- S$.

\begin{enumerate}
    \item An adversary chooses a set $S \subseteq V$ such that $|S| = k$.
    
    \item Create a graph $G' = (V, E')$, where each pair of vertices $u, v$ forms an edge independently with probability $P(u,v)$. $P(u,v)$ is formulated as follows.

    \[
    P(u, v) =
    \begin{cases}
        0 &: (u,v) \in S \times S \\
        p &: (u,v) \in S \times \overline S \\
    0 &: (u,v) \in \overline S\times \overline S
    \end{cases}
    \]

    \item The adversary can add any edge $(u,v)$ arbitrarily as long as $(u,v) \notin S\times S$. Our graph will be of the form $G=(V,E)$ where $S$ is an independent set in $G$ and $E\supset E'$. 
\end{enumerate}

This gives us a graph that is arbitrary on $(u,v) \in \overline S \times \overline S$, has no edges within $S$, and is lower bounded by Bernoulli random variables on the boundary $S \times \overline S$.

Our goal for this section will be to show the following:
\begin{lemma}
\label{lemma: closeness}
Call $x_u^*$ the vector corresponding to $u \in V$ for the optimal solution to our C-SDP. With high probability,
\[
\sum_{u,v\in S}||x_u^*-x_v^*||^2=O\left(\frac{n\sqrt k}{\sqrt p}\right)
\]
\end{lemma}

Our first step will be to show the following bound:

\begin{lemma}
\label{lemma: crude}
For the optimal solution we have
\[
\sum_{u,v\in S\times S}||x_u^*-x_v^*||^2+2\sum_{u,v\in S\times \overline S}||x_u^*-x_v^*||^2\leq 4k(n-k)
\]
\end{lemma}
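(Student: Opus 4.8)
The plan is to bound the optimum of the C-SDP from above by exhibiting one explicit feasible solution, and then to notice that once both sides are expanded over the blocks $S\times S$, $S\times\overline S$, $\overline S\times\overline S$, the $\overline S\times\overline S$ contribution appears on both sides and cancels, leaving exactly the desired inequality. Note that this lemma is purely deterministic: it will hold for every graph the model can produce, so there is no probabilistic content here; the randomness enters only later, when this bound is used to prove Lemma~\ref{lemma: closeness}.

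Concretely, I would first observe that we may take the optimal vectors $x_u^*$ to lie in $\R^n$ (any feasible Gram matrix has rank at most $|V|=n$). Since $|\overline S|=n-k<n$, the subspace $\mathrm{span}\{x_v^*:v\in\overline S\}$ has dimension at most $n-k$, so its orthogonal complement is nontrivial and contains a unit vector $z$. Define a new assignment $y$ by $y_u=z$ for all $u\in S$ and $y_v=x_v^*$ for all $v\in\overline S$. The next step is to check that $y$ is feasible for the C-SDP. Unit norms are clear. For the nonnegativity constraint: pairs inside $S$ give $\langle z,z\rangle=1\ge 0$, pairs in $S\times\overline S$ give $\langle z,x_v^*\rangle=0$ by the choice of $z$, and pairs inside $\overline S$ inherit $\langle x_u^*,x_v^*\rangle\ge 0$ from $x^*$. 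For the edge constraints: there are no edges inside $S$; edges inside $\overline S$ inherit orthogonality from $x^*$; and every edge between $S$ and $\overline S$ --- including the arbitrary ones the adversary added in Step~3 --- is automatically satisfied because $\langle z,x_v^*\rangle=0$ for all $v\in\overline S$. This last point is precisely why we collapse the vertices of $S$ onto a vector orthogonal to the \emph{entire} span of the $\overline S$-vectors, rather than, say, onto the vector of a single vertex of $S$: the adversarial $S$--$\overline S$ edges would otherwise obstruct feasibility.

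I would then compute the cost of $y$. The $S\times S$ block contributes $0$. For $(u,v)\in S\times\overline S$ we have $\|y_u-y_v\|^2=\|z-x_v^*\|^2=2-2\langle z,x_v^*\rangle=2$, so (with the convention that $\sum_{u,v}$ runs over ordered pairs, matching the factor of $2$ in the statement) that block contributes $2\sum_{(u,v)\in S\times\overline S}2=4k(n-k)$. The $\overline S\times\overline S$ block contributes exactly $\sum_{u,v\in\overline S\times\overline S}\|x_u^*-x_v^*\|^2$, the same value it takes in $x^*$. Optimality of $x^*$ gives $\sum_{u,v}\|x_u^*-x_v^*\|^2\le\sum_{u,v}\|y_u-y_v\|^2$; expanding the left-hand side over the three blocks and cancelling the common $\overline S\times\overline S$ term yields $\sum_{u,v\in S\times S}\|x_u^*-x_v^*\|^2+2\sum_{(u,v)\in S\times\overline S}\|x_u^*-x_v^*\|^2\le 4k(n-k)$, which is the claim.

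There is essentially no hard step here; the only points needing care are the dimension count guaranteeing the unit vector $z$ orthogonal to $\mathrm{span}\{x_v^*:v\in\overline S\}$ (valid since $k\ge 1$), and the verification that the adversary's $S$--$\overline S$ edges do not violate feasibility of $y$ --- both of which the construction is designed to handle cleanly.
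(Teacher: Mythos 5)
Your proposal is correct and is essentially the paper's own proof: the paper likewise replaces all vectors of $S$ in the optimal solution by a single unit vector orthogonal to the remaining vectors, checks feasibility, and cancels the $\overline S\times\overline S$ block to get the bound $4k(n-k)$. Your additional remarks (the dimension count for the existence of $z$ and the explicit check of the adversarial $S$--$\overline S$ edges) are just a more careful write-up of the same construction.
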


\begin{proof}
Here our analysis will be simpler if we think of our cost function as the equivalent
\[
\textnormal{minimize } \sum_{u,v} ||x_u^*-x_v^*||^2
\]

This is equivalent as all of our vectors are norm 1, so

\[\sum_{u,v} ||x_u-x_v||^2=\sum_{u,v} ||x_u||^2-2\langle x_u,x_v\rangle +||x_v||^2=\sum_{u,v} 2-2\langle x_u,x_v\rangle.\]

Consider the feasible solution to the SDP obtained by taking the optimal solution, then setting all vectors corresponding to $u \in S$ to a single unit vector $e$ orthogonal to all other vectors. We keep all vectors corresponding to vertices in $\overline S$ as the same as the optimal solution. Call $x_u'$ the vector for our new adjusted solution corresponding to vertex $u$. Since $x^*$ is optimal we have

\begin{eqnarray*}
    \left( \sum_{u, v \in V\times V} ||x_u^*-x_v^*||^2 - ||x_u'-x_v'||^2 \right) \leq 0.
    \end{eqnarray*}
    Terms that do not contain a vertex corresponding to $S$ do not change between the two solutions. Therefore we cancel these out, which yields
    \begin{eqnarray*}
    \left( \sum_{u, v \in S \times S} ||x_u^*-x_v^*||^2 - ||x_u'-x_v'||^2 \right) + 2\left( \sum_{u, v \in S \times \overline S} ||x_u^*-x_v^*||^2 - ||x_u'-x_v'||^2 \right) \leq0.
    \end{eqnarray*}
    We then know that for our adjusted solution, $||x_u'-x_v'||^2=0$ if both $u$ and $v$ are in $S$, and $||x_u'-x_v'||^2=2$ if only one of $u,v$ is in $S$, giving
    \begin{eqnarray*}
    \left(\sum_{u, v \in S\times S} ||x_u^*-x_v^*||^2 \right) + 2\left(\sum_{u, v \in S \times \overline S} ||x_u^*-x_v^*||^2 \right) - 4k(n-k) \leq0&\\
    \left( \sum_{u, v \in S \times S} ||x_u^*-x_v^*||^2 \right) + 2\left( \sum_{u, v \in S \times \overline S} ||x_u^*-x_v^*||^2 \right) \leq 4k(n-k)&
\end{eqnarray*}
\end{proof}
Our next step is to show that the second sum in Lemma~\ref{lemma: crude} is large. Towards this end we show the following.

\begin{lemma}
\label{lemma: Groth}
With high probability, for the initial random choice of edges $E'$,
\[
\max_{\substack{x_1,\ldots, x_n\\||x_u||=1, \\ \forall u \in V}}\left|\sum_{u\in S, v\in \overline S}\langle x_u,x_v\rangle-\frac1p\sum_{\substack{(u,v)\in E'\\u\in S,v\in\overline S}}\langle x_u,x_v\rangle  \right|= O\left(\frac{n\sqrt k}{\sqrt p}\right).
\]

\end{lemma}
\begin{proof}
We proceed along the lines of \cite{GW}. Set
\begin{eqnarray*}
D &:=&\max_{\substack{x_1,\ldots, x_n\\||x_u||=1, \\ \forall u \in V}}\left|\sum_{u\in S,v\in \overline S}\langle x_u,x_v\rangle-\frac1p\sum_{\substack{(u,v)\in E'\\u\in S,v\in\overline S}}\langle x_u,x_v\rangle\right|\\
&=& \frac1p\max_{\substack{x_1,\ldots, x_n\\||x_u||=1, \\ \forall u \in V}}\left|\sum_{u\in S,v\in \overline S}(p-A_{uv})\langle x_u,x_v\rangle\right|\\
\end{eqnarray*}
where $A_{uv}$ is the entry of the adjacency matrix corresponding to the vertex pair $u,v$ and the edge set $E'$.

We define a new matrix $M$ such that
\[
M_{uv}=\left\{
\begin{array}{cc}
      p-A_{uv} &(u,v)\in S\times \overline S\\
   0&\textnormal{otherwise}
\end{array}\right.
\]

We have
\[
\sum_{u,v\in V}M_{uv}\langle x_u,x_v\rangle=\sum_{u\in S, v\in \overline S}(p-A_{uv})\langle x_u,x_v\rangle
\]

We then have that there exists a constant $c$ such that
\[\max_{\substack{x_1,\ldots, x_n\\||x_u||=1\\\forall u\in V}}\left|\sum_{u,v\in V}M_{uv}\langle x_u,x_v\rangle\right|\leq c\max_{\substack{x_1,\ldots,x_n\in\{\pm1\}^n\\y_1,\ldots,y_n\in\{\pm 1\}^n}} \left|\sum_{u,v\in V}M_{uv}x_uy_v\right|\] by Grothendieck's inequality (\cite{GR}, see for example \cite{GrothIntro}). For a fixed set of $x_1,\ldots,x_n,y_1,\ldots, y_n\in \{\pm 1\}^{2n}$, 
\[
\sum_{u,v\in V}M_{uv}x_uy_v=\sum_{u
\in S,v\in \overline S}(p-A_{uv})x_uy_v.
\]
Each entry $A_{uv}$ corresponding to $u\in S,v\in \overline S$ is a Bernoulli random 0-1 variable. Each term $(p-A_{uv})x_uy_v$ has absolute value at most 1 and expectation 0.
Therefore, by Bernstein's inequality,
\begin{eqnarray}
\Pro\left[\left|\sum_{u\in S,v\in \overline S} (p-A_{uv})x_uy_v\right|\geq\epsilon p k(n-k)\right]<2e^{-\epsilon^2 p k(n-k)/3}
\end{eqnarray}
for any $0\leq\epsilon\leq 1$.

There are $2^{2n}$ possibilities for assignments of $x_u,y_v$, giving us
\begin{flalign}
\Pro\Bigg[\max_{\substack{x_1,\ldots,x_n=\{\pm1\}^n\\y_1,\ldots,y_n=\{\pm 1\}^n}}\left|\sum_{u\in S,v\in \overline S} (p-A_{uv})x_uy_v\right|\geq\epsilon pk(n-k)\Bigg]
&\leq 2^{2n+1}e^{-\epsilon^2 p k(n-k)/3}\\&<e^{1.5n-\epsilon^2pk(n-k)/3}
\end{flalign}
We have (2) from (1) and union bounding over assignments of $x_uy_v$. (3) is true if $n>6$.  

Therefore\[\Pro \left[D<c\epsilon k(n-k)\right]> 1-e^{1.5n-\epsilon^2pk(n-k)/3}.\]

If we set $\epsilon=\frac{3}{\sqrt{pk}}$ then with high probability

\[
\max_{\substack{x_1,\ldots, x_n\\||x_u||=1, \\ \forall u \in V}}\left|\sum_{u\in S, v\in \overline S}\langle x_u,x_v\rangle-\frac1p\sum_{\substack{(u,v)\in E'\\u\in S,v\in\overline S}}\langle x_u,x_v\rangle  \right|= O\left(\frac{n\sqrt k}{\sqrt p}\right).\]

\end{proof}

\begin{proof}[Proof of Lemma~\ref{lemma: closeness}]

By Lemma~\ref{lemma: crude},
\begin{eqnarray*}
\sum_{u,v\in S\times S}||x_u^*-x_v^*||^2+2\sum_{u,v\in S\times \overline S}||x_u^*-x_v^*||^2&\leq& 4k(n-k)\\
\sum_{u,v\in S\times S}||x_u^*-x_v^*||^2+2\left(\sum_{u,v\in S\times \overline S}2-2\langle x_u^*,x_v^*\rangle \right)&\leq& 4k(n-k).
\end{eqnarray*}
By pulling out the constant term, we get
\begin{eqnarray*}
\sum_{u,v\in S\times S}||x_u^*-x_v^*||^2-4\sum_{u,v\in S\times \overline S} \langle x_u^*,x_v^*\rangle &\leq& 0.
\end{eqnarray*}
Then, by Lemma~\ref{lemma: Groth}
\begin{eqnarray*}
\sum_{u,v\in S\times S}||x_u^*-x_v^*||^2-\frac4p\sum_{(u,v)\in E'}\langle x_u^*,x_v^*\rangle -O\left(\frac{n\sqrt{k}}{\sqrt p}\right)&\leq& 0.\\
\end{eqnarray*}
By the requirements of the SDP, $\langle x_u^*,x_v^*\rangle=0$ for all $(u,v)\in E'$. Therefore 

\[
\sum_{u,v\in S\times S}||x_u^*-x_v^*||^2= O\left(\frac{n\sqrt{k}}{\sqrt p}\right).
\]
\end{proof}
Note that this argument works even when the adversary adds edges to the boundary of $S$, as the vertex pairs corresponding to $E'$ will approximate $p$ of the overall sum with high probability regardless of whether the other vertex pairs correspond to edges or not. Our argument only requires that the vertex pairs corresponding to $E'$ correspond to edges, meaning the adversary cannot remove edges from the boundary, but they can add them.

\section{Algorithm Analysis and Recovery}

Our algorithm is as follows:
\begin{itemize}
    \item 
    Solve the crude SDP.
    \item
    For each vector, create a set of all vectors that are $\ell_2$ distance less than $\sqrt{2-\sqrt 2}$ from the original vector. Namely we take the ball of radius $\sqrt{2-\sqrt{2}}$ around the the vector $x_u$ and list all vectors inside the ball. We call this set of vertices $S_u$.
    \item
    Add to the set all vertices that are independent with all vertices already in the set.
    \item
    Return the largest such set.
\end{itemize}
We use this algorithm to prove our two theorems.
\begin{proof}[Proof of Theorem~\ref{thm: approx}]
Using Lemma \ref{lemma: closeness}, we have that with high probability there exists a constant $c_3$ such that

\[
\sum_{u,v\in S}||x_u^*-x_v^*||^2\leq c_3 \frac{n\sqrt k}{\sqrt p}
\]
for the optimal solution of our SDP. Therefore there is some vertex $u\in S$ such that

\[
\E_{v\in S}(||x_u^*-x_v^*||^2) \leq c_3 \frac{n}{k^{3/2}\sqrt p}.
\]

By Markov's inequality we have

\[
\Pro_{v\in S}(||x_u^*-x_v^*||^2\geq 2-\sqrt 2)\leq c_4 \frac{n}{k^{3/2}\sqrt p}.
\]
where $c_4=c_3/(2-\sqrt 2)$. We choose $\sqrt{2-\sqrt 2}$ because if $||x_u-x_v||^2<2-\sqrt 2$ then as $||x_u-x_v||^2=2-2\langle x_u,x_v\rangle$, this implies that $\cos(x_u,x_v)=\langle x_u,x_v\rangle>1/\sqrt 2$. Therefore any two points within the ball of radius $2-\sqrt 2$ around $x_v$ cannot be orthogonal, so the ball of this radius forms an independent set.

This means that after running the SDP,
\begin{eqnarray}
|S_u|\geq k-c_4\frac{n}{\sqrt {kp}}.
\end{eqnarray}
Thereby 
 for a given $\epsilon >0$, if we set $c_1(\epsilon)=(c_4/\epsilon)^{2/3}$, then if
\[k\geq c_1\frac {n^{2/3}}{p^{1/3}}\]
$|S_u|\geq (1-\epsilon)k$. As $S_u$ will be one of our $n$ candidate solutions, we can, with high probability, discover an independent set of size at least $(1-\epsilon)k$.
\end{proof}

This means that for $p$ as small as $p=\Theta(\frac1n)$ our algorithm will discover a large independent set, assuming large enough $k$. This is surprising, considering that in this regime, with high probability there are many vertices in $\overline S$ with no edges to $S$.

\begin{proof}[Proof of Theorem~\ref{thm: main}]
After adding vertices to $S_u$ greedily, we hope that the resulting set will be $S$. For this to be the case, we need to make sure that all vertices of $\overline S$ share at least one edge with $S_u$. This is at most the probability that every vertex of $\overline S$ has at least $k-|S_u|$ edges to the set $S$. Namely

\begin{eqnarray}
\min_{v\in \overline S} \deg(v,S)> k-|S_u|
\end{eqnarray}

By (4), this will be satisfied if

\begin{eqnarray}
\min_{v\in \overline S} \deg(v,S)>  c_4\frac{n}{\sqrt{kp}}
\end{eqnarray}

To satisfy (6) for the minimum ${v\in\overline S}$, it is of course necessary to satisfy (6) in the expectation over vertices in $\overline S$. Therefore we need

\[
kp> c_4\frac{n}{\sqrt{kp}}.
\]
Therefore, let us require that
\[
k\geq \frac{2c_4^{2/3}n^{2/3}}{p}.
\]
To show that this requirement is in fact sufficient, we have by Chernoff bounds that 

\begin{eqnarray}
    \Pr(\exists v \in \overline{S} \textnormal{ such that }\deg(v,S)\leq (1-\epsilon)kp) < (n - k) e^{-\epsilon^2kp/2}<ne^{-\epsilon^2c_4^{2/3} n^{2/3}}
\end{eqnarray}
so for $\epsilon=\frac12$, this probability will go to 0. Therefore (5) occurs with high probability.

We find, with high probability, all vertices in $\overline{S}$ share at least one edge with the elements from $S_u$. Hence, no vertices outside $S$ will be included in the set corresponding to $u$, as orthogonal vectors are $\ell_2$ distance $\sqrt{2}$ away. Since only elements from $S$ will be present in $S_u$, the remaining vertices of $S$ will be added during the greedy step. Therefore, when the algorithm terminates, this set will contain the original planted independent set with high probability.
\end{proof}

This means that our algorithm can give full recovery as long as $p=\Omega(\frac1{n^{1/3}})$ for $k$ a constant fraction of $n$. We do not hope for full recovery for $p<(1-\epsilon)\ln n/k$, as Feige and Kilian showed that with high probability this is impossible unless $\textsf{NP} \subseteq \textsf{BPP}$.

If we are given a vertex at random such as in the model of \cite{JS}, then we can recover the original set exactly.

\begin{thm}
If we are given a random vertex of $S$, then with high probability, we can recover the set $S$ when $k\geq c_2 {n^{2/3}}/p$.
\end{thm}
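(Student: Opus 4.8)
The plan is to reuse the clustering result of Lemma~\ref{lemma: closeness} together with an averaging argument, but now exploiting the fact that we are handed a uniformly random vertex $w \in S$ rather than having to guess a good center. First I would invoke Lemma~\ref{lemma: closeness}: with high probability the optimal C-SDP solution satisfies $\sum_{u,v \in S}\|x_u^* - x_v^*\|^2 \le c_2 n\sqrt k$. Dividing by $k$, the average over $u \in S$ of $\sum_{v\in S}\|x_u^*-x_v^*\|^2$ is at most $c_2 n\sqrt k / k = c_2 n k^{-1/2}$. Since $w$ is a \emph{uniformly random} element of $S$, by Markov's inequality applied to this per-vertex quantity, with probability at least, say, $9/10$ over the choice of $w$ we have $\sum_{v\in S}\|x_w^* - x_v^*\|^2 \le 10 c_2 n k^{-1/2}$. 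Applying Markov again inside this event, the number of vertices $v \in S$ with $\|x_w^* - x_v^*\|^2 \ge 1$ is at most $10 c_2 n k^{-1/2}$, which is $o(k)$ — indeed at most $\frac{1}{3}k$ — once $k \ge c_1 n^{2/3}$ for a suitable constant $c_1$. So the ball of radius $1$ around $x_w^*$ already captures at least $\frac{2}{3}k$ vertices of $S$.

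Next I would argue, exactly as in the proof of Theorem~\ref{thm: main}, that this radius-$1$ ball around $x_w^*$ contains \emph{no} vertex of $\overline S$: any $v \in \overline S$ has, by Chernoff bounds on its $\mathrm{Bin}(k,\tfrac12)$-lower-bounded number of $E'$-edges into $S$, at least $\frac{1}{3}k$ neighbors among the $\ge \frac{2}{3}k$ clustered vertices of $S$ with probability $1 - (n-k)e^{-k/54}$, so such a $v$ is orthogonal to at least one vector in the ball and hence at distance $2$, not $< 1$, from $x_w^*$. Thus the ball around the given vertex $w$ is a subset of $S$ of size $\ge \frac{2}{3}k$ and contains only vertices of $S$. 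This is the step where having an actual vertex of $S$ (rather than a list over all vertices) pays off: we get a canonical, correct seed cluster with no ambiguity.

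Finally I would complete the set to all of $S$ by the greedy closure step: starting from the seed cluster $B \subseteq S$ with $|B| \ge \frac{2}{3}k$, repeatedly add any vertex that is non-adjacent to everything currently in the set. Every vertex of $S$ is non-adjacent to all of $B$ (since $S$ is independent), so every vertex of $S$ does get added. Conversely, as shown above, with high probability every $v \in \overline S$ has an edge to some vertex of $B \subseteq S$, so no vertex of $\overline S$ is ever eligible to be added. Hence the process terminates exactly at $S$, and we output $S$ exactly, not merely a list. The one point needing a little care — and the main (mild) obstacle — is making sure the two Markov applications and the Chernoff union bound can be run simultaneously inside a single high-probability event, and that the constant $c_1$ is chosen large enough that $10c_2 n k^{-1/2} \le \frac{1}{3}k$; both are routine once the constants from Lemma~\ref{lemma: closeness} are fixed. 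Note also that the "with high probability" here is over the graph randomness, while the $9/10$ is over the random vertex, so one should either amplify the latter or simply observe that a $9/10$-probability exact recovery over the seed vertex is already a meaningful guarantee, and state the theorem accordingly.
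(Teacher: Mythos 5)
Your route is genuinely different from the paper's. The paper never re-centers a ball at the given vertex: it keeps the entire list of candidate sets produced by Theorem~\ref{thm: main}, prunes it (discarding sets of size below $n^{2/3}$, sets whose intersection with another set exceeds $\frac23$ of their size, and pairs of sets overlapping in more than $2\log n$ vertices), shows that $S$ survives the pruning, bounds the number of surviving sets by $2n^{1/3}$ via inclusion--exclusion, and then uses the random vertex only to disambiguate: at most $4n^{1/3}\log n$ elements of $S$ lie in any other surviving set, so the given vertex lies in exactly one surviving set (namely $S$) with probability $1-O(\log n/n^{1/3})$.

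The genuine gap in your argument is the probability guarantee over the random vertex. Lemma~\ref{lemma: closeness} only bounds the total $\sum_{u,v\in S}\|x_u^*-x_v^*\|^2$ by $c_2 n\sqrt k$, so the fraction of vertices $w\in S$ that are bad centers (say, with more than $\frac13 k$ vertices of $S$ at squared distance at least $1$, hence row sum exceeding $k/3$) is bounded only by $3c_2 n k^{-3/2}$, which at $k=c_1 n^{2/3}$ is the \emph{constant} $3c_2/c_1^{3/2}$; nothing in the lemmas rules out a constant fraction of bad centers. Your two Markov applications therefore give success probability $1-\Omega(1)$ over the choice of $w$ (your ``$9/10$''), not $1-o(1)$. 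Amplification is not available: you receive a single random vertex and have no independent way to certify which candidate set is the planted one, which is precisely why the paper resorts to the pruning-and-disambiguation argument. As written, your proof establishes the theorem only for $k=\omega(n^{2/3})$, or with ``high probability'' weakened to ``probability $1-\varepsilon$ for a constant $\varepsilon$ depending on $c_1$.'' (A minor further remark: the exclusion step --- a vertex of $\overline S$ orthogonal to \emph{some} vector in the ball is therefore far from $x_w^*$ --- conflates distance to that vector with distance to the center; the triangle inequality only gives distance at least $\sqrt2-1$ from $x_w^*$ unless the radius is shrunk to at most $\sqrt2/2$. The paper's proof of Theorem~\ref{thm: main} phrases this step the same way, so the issue is inherited rather than introduced by you, but a careful write-up would adjust the radius; with that adjustment your seed-cluster-plus-greedy-closure analysis, conditioned on $w$ being a good center, is sound.)
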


\begin{proof}
We add the following steps to the algorithm:
\begin{itemize}
    \item 
    Remove all independent sets of size less than $c_2n^{2/3}/p$ from our list.
    \item
    For sets $S_1,S_2$ on the list, if $S_1\neq S_2$ and $|S_1\cap S_2|\geq (1-p/2) |S_1|$ then remove $S_1$.
    \item
    For sets $S_1, S_2$ on the list, if $S_1\neq S_2$ and $|S_1\cap S_2|\geq 3(\log n)/p$, remove both $S_1$ and $S_2$ from the list.
    \item
    If our random vertex $u\in S$ is in exactly 1 set on our list, return this set. Otherwise, return FAIL.
\end{itemize}

First we will show that with high probability $S$ remains on the list by the end of the algorithm. If $S'$ is on the list before the first removal step, it is necessarily maximal by the greedy step. Therefore if $S'\neq S$, then $\exists v\in S'$ such that $v\notin S$. For $S'$ to be an independent set, there can be no edges from $v$ to $S'\cap S$. By (7), with high probability, every vertex in $\overline S$ satisfies $\deg(v,S)>kp/2$, meaning $|S'\cap S|<(1-p/2)k$, and $S$ survives this step. 

For $S'\neq S$ on the list immediately after the second removal step, we have $|S'\cap S|< (1-p/2)|S'|$ so $|S'\cap \overline S|>p|S'|/2>c_2n^{2/3}/2>3(\log n)/p$ for large enough $n$, as we must have $p=\Omega(1/n^{1/3})$. If $|S'\cap S|\geq 3(\log n)/p$, then there is $T\subset S'$ such that $|T\cap S|=3(\log n)/p$ and $|T\cap\overline S|=3(\log n)/p$.

For any such set $T$, the probability that $T$ is an independent set is at most $(1-p)^{(3(\log n)/p)^2}$. The probability that an independent $T$ exists is
\begin{eqnarray}
\Pr(\exists \textnormal{ independent }T)&\leq& \binom k{3(\log n)/p}\binom{n-k}{3(\log n)/p} (1-p)^{(3(\log n)/p)^2}\\
&<&k^{3(\log n)/p}(n-k)^{3(\log n)/p}e^{-p(3(\log n)/p)^2}\\
&=&k^{3(\log n)/p}(n-k)^{3(\log n)/p}n^{-9(\log n)/p}\\
&=&\left(\frac{k(n-k)}{n^3}\right)^{3(\log n)/p}
\ll 1
\end{eqnarray}
meaning that $S$ survives the second removal with high probability. (9) follows from the inequality $1-x\leq e^{-x}$.

If $s$ represents the number of unique independent sets on the list at the end of the algorithm, then by the inclusion exclusion principle,
\[
s\frac{c_2n^{2/3}}p-\binom s2\frac{3\log n}p \leq n
\]
as there are $n$ elements overall. Therefore

\[
sc_2n^{2/3}-\binom s23\log n \leq np.
\]

 From this we can see that we must have
\[
s\leq \frac{2n^{1/3}}{c_2}
\]
for large enough $n$.
The number of elements of $S$ that will appear in other independent sets in our list is at most \[s\frac{3\log n}p\leq \frac{6n^{1/3}\log n}{c_2p}.\] Therefore the probability that the random vertex of $S$ we receive is in any of the other sets remaining is at most \[\frac{6n^{1/3}\log n}{c_2pk}\leq\frac{6\log n}{c_2^2n^{1/3}}\ll 1.\]
\end{proof}

\small


\begin{thebibliography}{99}
\bibitem{AKS}
N. Alon, M. Krivelevich, and R. Sudakov. Finding a large hidden clique in a random graph. \textit{Random Structures and Algorithms}, 13(3-4): pp. 457-466, 1998.

\bibitem{Barak}
B. Barak, S. B. Hopkins, J. Kelner, P. Kothari, A. Moitra, and A. Potechin. A nearly tight sum-of-squares
lower bound for the planted clique problem. \textit{Foundations of Computer Science}, 428-437, 2016.

\bibitem{BS}
A. Blum and J. H. Spencer. Coloring random and semi-random k-colorable graphs.
\textit{Journal of Algorithms}, 19(2) pp. 204-234, 1995.

\bibitem{Char}
M. Charikar, J. Steinhardt, and G. Valiant. Learning from untrusted data. \textit{Proceedings of the Forty-Ninth Annual ACM Symposium on Theory of
Computing}, pp. 2017.

\bibitem{Chen}
Y. Chen, S. Sanghavi, and H. Xu. Improved graph clustering. \textit{IEEE Transactions on Information Theory},
60(10) pp. 6440-6455, 2014.

\bibitem{CO1}
A. Coja-Oghlan. Coloring semirandom graphs optimally. \textit{Automata, Languages and Programming},
pp. 71-100, 2004.
\bibitem{CO2}
A. Coja-Oghlan. Solving NP-hard semirandom graph problems in polynomial expected time. \textit{Journal of
Algorithms}, 62(1) pp. 19-46, 2007.


\bibitem{F}
U. Feige.  Randomized Graph Products, Chromatic Numbers, and the Lov{\'{a}}sz $theta$ Function.
{\em Combinatorica}, 17(1) pp. 79-90, 1997

\bibitem{FK}
U. Feige and J. Kilian. Heuristics for semirandom graph problems. \textit{Journal of Computer
and System Sciences}, 63(4) pp. 639-671, 2001.
\bibitem{FKr}
U. Feige and R. Krauthgamer. Finding and certifying a large hidden clique in a semirandom
graph. \textit{Random Structures and Algorithms}, 16(2) pp. 195-208, 2000.
\bibitem{GR}
A. Grothendieck. R\'esum\'e de la th\'eorie m\'etrique des produits tensoriels topologiques. \textit{Bol. Soc. Mat.
S\~ao Paulo}, 8 pp. 1-79, 1953.
\bibitem{GW}
O. Gu\'edon and R. Vershynin. Community detection in sparse networks via
grothendieck’s inequality. \textit{Probability Theory and Related Fields}, 165(3):1025–1049, 2016.

\bibitem{GrothIntro}
S. Khot and A. Naor. Grothendieck-type inequalities in combinatorial optimization. \textit{Communications on Pure and Applied Mathematics}, Volume 65, Issue 7, pp. 992-1035, 2012.

\bibitem{KMM}
A. Kolla, K. Makarychev, and K. Makarychev. How to play unique games against a semi-random adversary. \textit{In 
Proceedings of the Fifty-Second Annual
IEEE Symposium on Foundations of Computer Science}, pp. 443-452, 2011.

\bibitem{KARP}
R. M. Karp. Reducibility among combinatorial problems, In \textit{Complexity of computer computations}, pp. 85-103, R. E. Miller and J. W. Thatcher (eds.), Plenum Press, New York, 1972.

\bibitem{Kucera}
L. Kučera. Expected complexity of graph partitioning problems, \textit{Discrete Applied Mathematics} 57(2-3) pp. 193-212, 1995.

\bibitem{Matula}
D. Matula. The largest clique size in a random graph. \textit{Technical Report, Department of Computer Science, Southern Methodist University}, 1976.

\bibitem{MMV}
K. Makarychev, Y. Makarychev, and A. Vijayaraghavan. Approximation algorithms for semi-random graph partitioning problems. In \textit{Proceedings of the Forty-Fourth Annual ACM Symposium on Theory of Computing}, pp. 367-384, 2012.

\bibitem{MMV2}
K. Makarychev, Y. Makarychev, and A. Vijayaraghavan. Constant factor approximation
for balanced Cut in the PIE model. In \textit{Proceedings of the Forty-Sixth Annual ACM Symposium
on Theory of Computing}, pp. 41-49, 2014.

\bibitem{JS}
J. Steinhardt. Does robustness imply tractability? A lower bound for planted clique in the semi-random model. \textit{arXiv:1704.05120}, 2017.

\bibitem{Zuck}
D. Zuckerman. Linear degree extractors and the inapproximability of max clique and chromatic number. In \textit{Proceedings of the Thirty-Eighth Annual ACM Symposium on Theory of Computing}, pp. 681-690, 2006.

\end{thebibliography}
\end{document}